\documentclass[pra,aps,floatfix,twocolumn,superscriptaddress,amsmath,amssymb]{revtex4-2}
\usepackage{graphicx}
\usepackage{hyperref}
\hypersetup{colorlinks=true,citecolor=blue,linkcolor=blue,urlcolor=blue}
\usepackage{bm}
\usepackage{braket}
\allowdisplaybreaks

\usepackage{amsthm}
\theoremstyle{definition}
\newtheorem{theorem}{Theorem}
\newtheorem{lemma}[theorem]{Lemma}

\begin{document}

\title{Stabilizer R\'enyi entropy of 3-uniform hypergraph states}

\author{Daichi Kagamihara}
\email{dkagamihara119@g.chuo-u.ac.jp}
\affiliation{%
Department of Physics, Chuo University, Bunkyo, Tokyo 112-8551, Japan}

\author{Shunji Tsuchiya}
\affiliation{%
Department of Physics, Chuo University, Bunkyo, Tokyo 112-8551, Japan}

\date{\today}

\begin{abstract}
Nonstabilizerness, also known as magic, plays a central role in universal quantum computation. Hypergraph states are nonstabilizer generalizations of graph states and constitute a key class of quantum states in various areas of quantum physics, such as the demonstration of quantum advantage, measurement-based quantum computation, and the study of topological phases. In this work, we investigate nonstabilizerness of 3-uniform hypergraph states, which are solely generated by controlled-controlled-Z gates, in terms of the stabilizer R\'enyi entropy (SRE). We find that the SRE of 3-uniform hypergraph states can be expressed using the matrix rank, which enables us to evaluate the quantity in $\mathcal{O}(N^3 2^{N})$ computational cost for $N$-qubit states. Furthermore, combined with Monte Carlo sampling, our findings provide an efficient method for estimating SRE. Based on this result, we exactly evaluate SREs of one-dimensional hypergraph states. We also present numerical results of SREs of several large-scale 3-uniform hypergraph states. Our results would contribute to an understanding of the role of nonstabilizerness in a wide range of physical settings where hypergraph states are employed.
\end{abstract}

\maketitle


\section{Introduction}
\label{sec_introduction}

Quantum computational advantage fundamentally relies on nonstabilizerness, also called magic, which enables quantum computations to outperform classical ones~\cite{Bravyi2005, Veitch2014, Howard2017}. The Gottesman-Knill theorem~\cite{gottesman1998,Aaronson2004} establishes that Clifford circuits acting on stabilizer states can be efficiently simulated by classical computers, highlighting the need for nonstabilizer resources to achieve universal quantum computation. To characterize such resources, several quantitative measures of nonstabilizerness have been proposed~\cite{Veitch2014,Bravyi2016,Howard2017,Bravyi2019,Wang2020,Leone2022,Warmuz2025,Tarabunga2025}. However, many of them involve high-dimensional optimizations over the stabilizer polytope, making them computationally intractable for large systems~\cite{Howard2017,Heinrich2019,Hamaguchi2024}. Recently, an optimization-free measure called the stabilizer R\'enyi entropy (SRE)~\cite{Leone2022,Leone2024} has gained attention as a practical tool for quantifying nonstabilizerness in multi-qubit systems, providing new insights into the structure of nonstabilizerness in complex quantum states~\cite{Lami2023,Tarabunga-Castelnovo2024,Turkeshi2025,Ding2025,hoshino2025,liu2025,Sinibaldi2025}.

Hypergraph states~\cite{Rossi2013,QuRi2013,salem2025} are generalizations of graph states~\cite{Hein2004,Hein2006} by allowing multi-qubit entangling operations, and have emerged as a key class of quantum states for demonstrating quantum advantage~\cite{Bremner2016}, measurement-based quantum computation (MBQC)~\cite{Miller2016,Takeuchi2019,Gachechiladze2019}, and topological phases~\cite{Yoshida2016,Miller2018}. A graph state is a multi-qubit quantum state defined by a graph, where each vertex represents a qubit initialized in $\ket{+} = (\ket{0} + \ket{1}) / \sqrt{2}$, and each edge represents a controlled-Z (CZ) gate applied between the connected qubits. Graph states are examples of stabilizer states. A hypergraph state generalizes this construction by introducing hyperedges, which correspond to multi-qubit controlled-Z gates acting simultaneously on three or more qubits. Since such multi-qubit controlled-Z gates (e.g., controlled-controlled-Z (CCZ) gate) are non-Clifford operations, the resulting hypergraph states are generally nonstabilizer states and thus serve as useful resources for quantum computation beyond the stabilizer formalism.

Both graph states and hypergraph states are typical resource states in MBQC, but they are used in quite different ways. MBQC is the quantum computation model that consists of two elements: preparing an entangled resource state and performing quantum computation by adaptively measuring each qubit in a suitable basis. MBQC using a cluster state, a type of graph state, needs arbitrary single-qubit measurements~\cite{Raussendorf2003}. They supply nonstabilizerness to the state, and then may enable universal computations despite the lack of nonstabilizerness of the resource states~\cite{Gong-Chu2025}. In contrast, MBQC with a hypergraph state can achieve universal computations by using only Pauli basis measurements~\cite{Miller2016,Takeuchi2019,Gachechiladze2019}, which is called the Pauli universality. Since the Pauli measurements do not provide any nonstabilizerness, the nonstabilizerness of the resource state may play a crucial role in hypergraph state MBQC. This raises the question of which characteristics of nonstabilizerness of hypergraph states are relevant to the Pauli universality.

In this paper, we investigate the nonstabilizerness of 3-uniform hypergraph states, some of which are used in the hypergraph state MBQC. A hypergraph state generated exclusively by CCZ gates is referred to as a 3-uniform hypergraph state~\cite{Rossi2013}. We adopt SRE as a computable measure of nonstabilizerness. We show that the SRE of 3-uniform hypergraph states can be expressed using the matrix rank. This expression results in an exponential reduction in the computational cost of the SRE of these states. Utilizing this advantage, we numerically evaluate the SREs of large-scale 3-uniform hypergraph states.

This paper is organized as follows: In Sec.~\ref{sec:preliminaries}, we introduce the definitions of SRE and 3-uniform hypergraph states. We also review previous work on the SRE of hypergraph states~\cite{Chen2024}. In Sec.~\ref{sec:results}, we first show our main result, Theorem~\ref{theorem:main}. Then we discuss physical interpretations and consequences of our results. We also present exact and numerical evaluations of some hypergraph states. In Sec.~\ref{sec:summary}, we summarize our results and discuss future directions.

\section{Preliminaries}
\label{sec:preliminaries}

Throughout this paper, we consider $N$-qubit systems. Stabilizer R\'enyi entropy (SRE) $M_{\alpha}$ is defined as~\cite{Leone2022,Leone2024,Chen2024}
\begin{align}
M_{\alpha} &= \frac{1}{1-\alpha} \log_{2} m_{\alpha},
\label{eq:def_SRE}
\\
m_{\alpha} &= \frac{1}{2^{N}} \sum_{\bm{x},\bm{z}} | \bra{\psi} P^{\bm{x},\bm{z}} \ket{\psi}|^{2\alpha},
\end{align}
where $P^{\bm{x},\bm{z}}$ is a Pauli string, which is parametrized by two bit strings $\bm{x} = (x_0, x_1, \dots, x_{N-1})$ and $\bm{z}$: $P^{\bm{x},\bm{z}} = \sqrt{-1}^{\bm{x} \cdot \bm{z}} X^{\bm{x}} Z^{\bm{z}}$, where we denote $X^{\bm{x}} = X_0^{x_0} X_1^{x_1} \cdots X_{N-1}^{x_{N-1}}$. $m_{\alpha}$ is called the Pauli-Liouville (PL) moment~\cite{Chen2024}. SRE satisfies the resource theory requirements~\cite{Leone2022}: (i) Invariance under Clifford operations $C$; $M_{\alpha}(C \ket{\psi}) = M_{\alpha}(\ket{\psi})$, and (ii) Faithfulness; $M_{\alpha}(\ket{\psi}) = 0$ iff $\ket{\psi}$ is a pure stabilizer state, and otherwise $M_{\alpha}(\ket{\psi}) > 0$. These ensure that SRE is a good measure of nonstabilizerness. In addition, for $\alpha \geq 2$, $M_{\alpha}$ is a stabilizer monotone under stabilizer protocols~\cite{Leone2024}.

The 3-uniform hypergraph state is described as~\cite{Rossi2013,Chen2024}
\begin{align}
\ket{\psi} = \prod_{(i,j,k) \in E_3} CCZ_{i,j,k} \ket{+}^{\otimes N},
\label{eq:def_3-uniform_hypergraph_state}
\end{align}
where $E_{3}$ is a set of three-qubit hyperedges. The CCZ gate is defined as
\begin{align}
CCZ_{i,j,k} = I_{i,j,k} - 2\ket{1,1,1}_{i,j,k} \bra{1,1,1}_{i,j,k},
\end{align}
where $I_{i,j,k}$ is the three-qubit identity operator and $i$, $j$, and $k$ denote the qubit indices. By defining $U \equiv \prod_{(i,j,k) \in E_3} CCZ_{i,j,k}$, one can express the corresponding density matrix in the generalized-stabilizer form~\cite{Chen2024}:
\begin{gather}
\rho = |\psi \rangle \langle \psi| = \frac{1}{2^{N}} \sum_{\bm{s}} S^{\bm{s}},~S^{\bm{s}} = g_{0}^{s_0} g_1^{s_1} \dots g_{N-1}^{s_{N-1}},
\\
g_i = U X_i U,~g_{i} \ket{\psi} = \ket{\psi},~(i=0,1,\dots,N-1).
\end{gather}
Note that the generalized generator $g_{i}$ is not a Pauli string and has the form
\begin{align}
g_i = X_i \prod_{j,k: (i,j,k) \in E_3} CZ_{j,k},
\end{align}
where we used $CCZ_{i,j,k} X_i CCZ_{i,j,k} = X_{i} CZ_{j,k}$.

The previous work~\cite{Chen2024} investigated the SRE of general hypergraph states (not restricted to 3-uniform ones) and obtained useful relations. We present some of their results relevant to our study, adapted to the current context. The generalized stabilizer $S^{\bm{s}}$ can be expressed in a simple form:
\begin{align} \notag
S^{\bm{s}} &= (-1)^{\sum_{(i,j,k) \in E_3} s_i s_j s_k} X^{\bm{s}}
\\
&\times \prod_{i=0}^{N-1} \left( Z_{i}^{\sum_{j,k: (i,j,k) \in E_3} s_j s_k} \prod_{j,k: (i,j,k) \in E_3} CZ_{j,k}^{s_i} \right).
\label{eq:stabilizer}
\end{align}
The expectation value of a Pauli string with respect to a 3-uniform hypergraph state is given by
\begin{align}
|\bra{\psi} P^{\bm{x},\bm{z}} \ket{\psi}|
= \left| \frac{1}{2^N} \sum_{\bm{s}} \mathrm{Tr} (P^{\bm{x},\bm{z}} S^{\bm{s}}) \right|.
\end{align}
From Eq.~\eqref{eq:stabilizer}, $\mathrm{Tr} (P^{\bm{x},\bm{z}} S^{\bm{s}})$ has a finite value only when $\bm{s} = \bm{x}$ because Z and CZ operators give only phases to the computational basis. Then, we can express it as
\begin{align} \notag
&|\bra{\psi} P^{\bm{x},\bm{z}} \ket{\psi}|
= \left| \frac{1}{2^N} \sum_{\bm{a}} \bra{\bm{a}} P^{\bm{x},\bm{z}} S^{\bm{x}} \ket{\bm{a}} \right|
\\ \notag
&= \Big| \frac{1}{2^N} 
\sum_{\bm{a}} (-1)^{\sum_{i=0}^{N-1} a_i (z_i + \sum_{j,k: (i,j,k) \in E_3} x_j x_k)}
\\
&\quad \quad \times (-1)^{\sum_{i, j,k: (i,j,k) \in E_3} a_i a_j x_k} \Big|,
\end{align}
where $\bm{a}$ is an $N$-bit string, $\ket{\bm{a}} = \ket{a_0}_0 \otimes \ket{a_1}_1 \otimes \dots \otimes \ket{a_{N-1}}_{N-1}$ represents the computational basis. By rearranging $\bm{z}$, we obtain the PL moment of a 3-uniform hypergraph state as
\begin{align}
m_{\alpha}
=
\frac{1}{2^{N}} \sum_{\bm{x},\bm{z}} \left[ \frac{1}{2^{N}} \sum_{\bm{a}} (-1)^{\bm{z} \cdot \bm{a} + \sum_{i<j} C_{i,j}(\bm{x}) a_i a_j} \right]^{2\alpha},
\label{eq:m_prev_work}
\end{align}
where we define $C_{i,j}(\bm{x}) = \sum_{k: (i,j,k) \in E_3} x_k$. $C_{i,j}$ is an upper-triangular matrix with zero diagonal entries, and the summation of $k$ in $C_{i,j}$ is taken modulo 2.

\section{Results}
\label{sec:results}

In this section, we first show our main result and its proof in Sec.~\ref{sec:Result_main}.
In Sec.~\ref{sec:Result_PhysMean}, we discuss the physical meaning of our result.
In Sec.~\ref{sec:Result_upper_bound}, we derive an upper bound for SRE, as a corollary of our results.
Then we show the evaluations of SREs of 3-uniform hypergraph states.
In Sec.~\ref{sec:Result_SRE_1d}, we consider one-dimensional hypergraph states and show how to compute their SREs exactly.
We then consider the two-dimensional hypergraph states and estimate their SREs for large-scale hypergraph states in Sec.~\ref{sec:result_SRE_2d}.

\subsection{Main result}
\label{sec:Result_main}

Our main result is the following.
\begin{theorem}
The PL moment of a 3-uniform hypergraph state is given by
\begin{align}
m_{\alpha} = \frac{1}{2^N} \sum_{\bm{x}} 2^{(1-\alpha) 2h(\bm{x})},
\label{eq:m_present_work}
\end{align}
where $2h(\bm{x})$ (always even) is the rank of the GF(2) symmetric matrix $C(\bm{x}) + C(\bm{x})^T$. 
\label{theorem:main}
\end{theorem}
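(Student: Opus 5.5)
We start from Eq.~\eqref{eq:m_prev_work} and, for each fixed $\bm{x}$, evaluate the inner sum over $\bm{a}$ as a quadratic Gauss sum over GF(2). Write $q_{\bm{x}}(\bm{a}) = \sum_{i<j} C_{i,j}(\bm{x}) a_i a_j \pmod 2$. This is a quadratic form on $\mathrm{GF}(2)^N$ with no diagonal terms. Its associated bilinear (polar) form is
\[
B(\bm{a},\bm{b}) = q_{\bm{x}}(\bm{a}+\bm{b}) + q_{\bm{x}}(\bm{a}) + q_{\bm{x}}(\bm{b}) = \bm{a}^T A \bm{b},
\qquad A = C(\bm{x})+C(\bm{x})^T .
\]
The matrix $A$ is symmetric with zero diagonal, so it is alternating. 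Its rank over GF(2) is therefore even, and we write it as $2h(\bm{x})$. This settles the parenthetical claim of Theorem~\ref{theorem:main}.

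The key step is to compute $f(\bm{z}) = 2^{-N}\sum_{\bm{a}} (-1)^{\bm{z}\cdot\bm{a} + q_{\bm{x}}(\bm{a})}$ exactly. We do this by squaring. Substituting $\bm{a}' = \bm{a}+\bm{b}$ gives
\[
|f(\bm{z})|^2 = 2^{-2N}\sum_{\bm{a},\bm{b}} (-1)^{\bm{z}\cdot\bm{b} + q_{\bm{x}}(\bm{b}) + \bm{a}^T A \bm{b}} .
\]
Summing over $\bm{a}$ forces $\bm{b} \in \ker A$ and contributes a factor $2^N$, so
\[
|f(\bm{z})|^2 = 2^{-N}\sum_{\bm{b}\in\ker A} (-1)^{\bm{z}\cdot\bm{b}+q_{\bm{x}}(\bm{b})} .
\]
On $\ker A$ the bilinear form vanishes, so $q_{\bm{x}}$ is additive there, i.e.\ linear. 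Hence $\bm{b}\mapsto \bm{z}\cdot\bm{b}+q_{\bm{x}}(\bm{b})$ is a linear functional on $\ker A$. The sum is then either $|\ker A| = 2^{N-2h}$ or $0$, which gives
\[
|f(\bm{z})|^2 \in \{0,\; 2^{-2h(\bm{x})}\}.
\]

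Next we count how many $\bm{z}$ give the nonzero value. Since $f$ is a Fourier transform of a $\pm1$ function, Parseval gives $\sum_{\bm{z}} |f(\bm{z})|^2 = 1$. The number of $\bm{z}$ with $|f(\bm{z})|^2 = 2^{-2h(\bm{x})}$ is therefore exactly $2^{2h(\bm{x})}$. (Alternatively, count directly: the condition is that $\bm{z}$ restricted to $\ker A$ equals a fixed functional, an affine condition of codimension $N-2h$.)

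Substituting into Eq.~\eqref{eq:m_prev_work}, the terms there appear as $[f]^{2\alpha} = (|f|^2)^{\alpha}$. For each $\bm{x}$ we obtain
\[
\sum_{\bm{z}} [f(\bm{z})]^{2\alpha} = 2^{2h(\bm{x})}\, 2^{-2h(\bm{x})\alpha} = 2^{(1-\alpha)2h(\bm{x})} .
\]
Summing over $\bm{x}$ with the prefactor $2^{-N}$ yields Eq.~\eqref{eq:m_present_work}.

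The main obstacle is the step showing that $q_{\bm{x}}$ restricted to $\ker A$ is linear. One has to check carefully that, because $q_{\bm{x}}$ has no diagonal terms, its polar form is exactly $\bm{a}^T A\bm{b}$ modulo 2. This requires keeping the upper-triangular $C$ and the mod-2 reductions straight. Once that is in place, the remaining steps are routine character-sum manipulations.
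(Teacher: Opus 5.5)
Your proof is correct, and it takes a genuinely different route from the paper's.

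The paper brings $Q=\bm{a}^T C(\bm{x})\bm{a}$ to the Lidl--Niederreiter normal form by an explicit change of variables $\bm{a}=P\bm{a}'$ (Appendix~\ref{appendix_A}). It then factorizes the $\bm{a}$-sum into two-variable and one-variable pieces, with a case split on the parity of $r$ and on $\eta$. Finally it needs a separate argument (Appendix~\ref{appendix_B}) to identify the number of quadratic pairs with half the rank of $C(\bm{x})+C(\bm{x})^T$.

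You instead square the Gauss sum, which reduces $|f(\bm{z})|^2$ to a character sum over the radical $\ker A$, where $q_{\bm{x}}$ is linear. You then fix the size of the support by Parseval; equivalently, the support is a coset of $(\ker A)^{\perp}=\mathrm{im}\,A$. What this buys you:
\begin{itemize}
\item the rank enters directly through $|\ker A|=2^{N-2h(\bm{x})}$, so neither the normal-form classification nor Appendix~\ref{appendix_B} is needed;
\item the parity case split disappears;
\item evenness of the rank comes simply from $A$ being alternating.
\end{itemize}
The step you flag as the main obstacle is not actually delicate. Diagonal terms $C_{ii}a_i^2=C_{ii}a_i$ are linear and drop out of the polar form anyway, so your argument works verbatim for any GF(2) quadratic form.

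What the paper's route buys in exchange is the exact value of each $\bm{z}$-term, including its sign, together with the explicit basis change $P$. Your squaring trick discards the sign, which is harmless here since only $|f|^{2\alpha}$ enters. It also does not produce the CNOT circuit $U_{\rm CNOT}$. The paper uses that circuit for the minimal-CZ-count interpretation of $h(\bm{x})$ in Sec.~\ref{sec:Result_PhysMean} and for the decoupling-based recursion in Sec.~\ref{sec:Result_SRE_1d}.
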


\begin{proof}
The exponent of $(-1)$ in Eq.~\eqref{eq:m_prev_work}, $\bm{z} \cdot \bm{a} + \bm{a}^T C(\bm{x}) \bm{a}$, can be treated over GF(2). Here, GF(2) denotes the finite field with elements \{0,1\}, where addition corresponds to the exclusive-or (XOR) operation and multiplication corresponds to the logical AND operation. From the theory of the finite field quadratic form~\cite{Lidl_Niederreiter_1996}~(for completeness, we review them in Appendix~\ref{appendix_A}), the quadratic form $Q = \bm{a}^T C(\bm{x}) \bm{a}$ can be transformed into a standard form by a linear transformation $\bm{a} = P\bm{a}'$, where $P$ is a non-singular matrix over GF(2) (so that the transformation is composed of only XOR operations):
\begin{align}
Q' =
\begin{cases}
a'_0 a'_1 + \dots +  a'_{r-3} a'_{r-2} + a'_{r-1} & (\text{$r$: odd}),
\\
a'_0 a'_1 + \dots +  a'_{r-2} a'_{r-1} + \eta (a'_{r-2} + a'_{r-1}) & (\text{$r$: even}),
\end{cases}
\label{eq:GF2_standard_form}
\end{align}
where $r$ represents the dimension of the non-degenerate subspace of $Q$, and $\eta = 0$ or $1$. As shown in Appendix~\ref{appendix_B}, the rank $2h(\bm{x})$ of the GF(2) symmetric matrix $C(\bm{x})+ C(\bm{x})^T$ corresponds to twice the number of quadratic pairs, that is, $r-1 ~ (r)$ when $r$ is odd (even). Using this transformation, Eq.~\eqref{eq:m_prev_work} can be decoupled as, for odd $r[=2h(\bm{x})+1]$,
\begin{align} \notag
&\sum_{\bm{a}} (-1)^{\bm{z} \cdot \bm{a}+ Q}
\\ \notag
&=
\sum_{\bm{a}'} (-1)^{\bm{z}' \cdot \bm{a}'} \left[ \prod_{j=0}^{h(\bm{x})-1} (-1)^{a'_{2j} a'_{2j+1}} \right] (-1)^{a'_{2h(\bm{x})}} 
\\ \notag
&=
\prod_{j=0}^{h(\bm{x})-1} \left[ \sum_{a'_{2j}, a'_{2j+1}} (-1)^{z'_{2j} a'_{2j} + z'_{2j+1} a'_{2j+1}} (-1)^{a'_{2j} a'_{2j+1}} \right] 
\\ \notag
&\times
\sum_{a'_{2h(\bm{x})}} (-1)^{z'_{2h(\bm{x})} a'_{2h(\bm{x})}} (-1)^{a'_{2h(\bm{x})}} \times \prod_{k=2h(\bm{x})+1}^{N-1} \sum_{a'_k} (-1)^{z'_k a'_k},
\end{align}
where we define $\bm{z}' = P^T \bm{z}$. By taking $\bm{a}'$ summations explicitly, we obtain 
\begin{align} \notag
\frac{1}{2^{N}} \sum_{\bm{a}} (-1)^{\bm{z} \cdot \bm{a}+ Q}
&=
2^{-h(\bm{x})} \prod_{j=0}^{h(\bm{x})-1} (-1)^{z'_{2j} z'_{2j+1}}
\\ \notag
&\quad \times \delta_{z'_{2h(\bm{x})},1} \times \prod_{k=2h(\bm{x})+1}^{N-1} \delta_{z'_k,0}.
\end{align}
In the same way, for even $r[=2h(\bm{x})]$,
\begin{align} \notag
&\frac{1}{2^{N}} \sum_{\bm{a}} (-1)^{\bm{z} \cdot \bm{a}+ Q}
=
2^{-h(\bm{x})} \prod_{j=0}^{h(\bm{x})-2} (-1)^{z'_{2j} z'_{2j+1}}
\\ \notag
&\quad
\times  (-1)^{(z'_{2h(\bm{x})-2} +\eta) (z'_{2h(\bm{x})-1} + \eta)} \times \prod_{k=2h(\bm{x})}^{N-1} \delta_{z'_k,0}.
\end{align}
In both cases, the absolute values of these terms are given by $2^{-h(\bm{x})}$ when $z'_{k}$ is zero for $k = 2h(\bm{x}), 2h(\bm{x})+1, \dots, N-1$, and $0$ when at least one $z'_k$ is unity for $k=2h(\bm{x}), 2h(\bm{x})+1,\dots, N-1$. The number of non-zero terms is $2^{2h(\bm{x})}$, corresponding to the pattern of $z'_0, \dots, z'_{2h(\bm{x})-1}$. Thus, we also analytically take the $\bm{z}~(\bm{z}')$ summation:
\begin{align} \notag
\sum_{\bm{z}} \left|\frac{1}{2^{N}} \sum_{\bm{a}} (-1)^{\bm{z} \cdot \bm{a}+ Q} \right|^{2\alpha} = 2^{2h(\bm{x})} \times \left[ 2^{-h(\bm{x})} \right]^{2\alpha}.
\end{align}
Substituting this into Eq.~\eqref{eq:m_prev_work}, we obtain Eq.~\eqref{eq:m_present_work}.
\end{proof}

Once $m_{\alpha}$ is determined, SRE is obtained from Eq.~(\ref{eq:def_SRE}). The cases $\alpha = 1$ and $\alpha = \infty$ should be treated separately; $M_{\alpha \to 1} = \sum_{\bm{x}} 2h(\bm{x}) / 2^{N}$ and $M_{\alpha \to \infty} = 0$. The former is simply obtained by taking the limit $\alpha \to 1$. The latter is formally expressed as $M_{\alpha \to \infty} = \min_{\bm{x}} 2h(\bm{x})$ and the minimization is achieved for $\bm{x} = \bm{0}$ and $h(\bm{x} = \bm{0}) = 0$ since $C(\bm{x}=\bm{0})$ is the zero matrix.

The computational cost of evaluating the rank $2h(\bm{x})$ is $\mathcal{O}(N^3)$, so that the $\bm{a}$ and $\bm{z}$ summations, which require an $\mathcal{O}(4^{N})$ cost when evaluated according to the definition, are replaced by a polynomial time procedure. Thus, our result allows the evaluation of SREs in $\mathcal{O}(N^32^{N})$, which is exponentially faster than the naive approach from the definition $\mathcal{O}(2^{3N})$ and the recent improved algorithm $\mathcal{O}(N2^{2N})$~\cite{huang2026,xiao2026,Sierant2026}. Our result still has the $\bm{x}$ summation, and thus it requires an exponential cost for an exact evaluation. For an approximate evaluation, the cost of the $\bm{x}$ summation can be reduced by using a Monte Carlo technique. Note that, as seen from Eq.~\eqref{eq:m_present_work}, the Monte Carlo sampling is free from the sign problem, and the quantity can therefore be estimated efficiently in polynomial time.

To intuitively understand why our result allows faster evaluation, it is helpful to rewrite the expectation value of a Pauli string as
\begin{align}
\left| \bra{\psi} P^{\bm{x},\bm{z}} \ket{\psi} \right| = \left| \bra{+}^{\otimes N} Z^{\bm{z}} S^{\bm{x}} \ket{+}^{\otimes N} \right|.
\end{align}
Here, $Z^{\bm{z}} S^{\bm{x}} \ket{+}^{\otimes N}$ and $\ket{+}^{\otimes N}$ are stabilizer states since the generalized stabilizer $S^{\bm{x}}$ is composed of X, Z and CZ operators. Therefore, their inner product can be evaluated in an $\mathcal{O}(N^3)$ cost by utilizing the Aaronson-Gottesman tableau algorithm~\cite{Aaronson2004}. The efficient evaluation of $|\bra{\psi} P^{\bm{x}, \bm{z}} \ket{\psi}|$ in our method would be attributed to the Aaronson-Gottesman algorithm, the generalization of the Gottesman-Knill theorem.

Before going to the next topic, we briefly comment on the inclusion of hyperedges containing one and two qubits, which corresponds to the inclusion of Z and CZ gates in Eq.~\eqref{eq:def_3-uniform_hypergraph_state}. Their presence does not affect our result Theorem~\ref{theorem:main}, because they are Clifford operations and cannot contribute to nonstabilizerness. Thus, our result can apply to hypergraph states with hyperedges containing at most three qubits.

\subsection{Physical meaning of our result}
\label{sec:Result_PhysMean}

To gain further insight, let us consider the physical interpretation of Theorem~\ref{theorem:main}. From the derivation, the $\bm{x}$ summation in Eq.~\eqref{eq:m_present_work} corresponds to the summation over generalized stabilizers. For a fixed $\bm{x}$, the product of CZ gates in a generalized stabilizer $S^{\bm{x}}$~\eqref{eq:stabilizer} can be rewritten as
\begin{align}
\prod_{i=0}^{N-1} \prod_{j,k: (i,j,k) \in E_3} CZ_{j,k}^{x_i} = \prod_{i<j} CZ_{i,j}^{C_{i,j}(\bm{x})}.
\end{align}
This implies that $C_{i,j}(\bm{x}) = 1(0)$ represents the presence (absence) of a CZ gate on $(i,j)$ qubits in the generalized stabilizer $S^{\bm{x}}$. Using the diagonal representation of a CZ gate $CZ_{i,j}$ in the computational basis,
\begin{align}
CZ_{i,j} = \sum_{\bm{a}} (-1)^{a_ia_j} \ket{\bm{a}} \bra{\bm{a}},
\end{align}
the product of CZ gates can be rewritten as
\begin{align}
\prod_{i<j} CZ_{i,j}^{C_{i,j}(\bm{x})} = \sum_{\bm{a}} (-1)^{\sum_{i<j} C_{i,j}(\bm{x}) a_ia_j} \ket{\bm{a}} \bra{\bm{a}}.
\label{eq:prod_CZ_expr_comp_basis}
\end{align}
The quadratic form $Q(\bm{a})=\sum_{i<j} C_{i,j}(\bm{x}) a_i a_j$ appears in the exponent of $(-1)$.

Next, we apply a GF(2) linear transformation $\bm{a} = P \bm{a}'$ used in Eq.~\eqref{eq:GF2_standard_form}. Let $U_{\rm CNOT}$ be a unitary operation which encodes the relation $\bm{a} = P \bm{a}'$ in the computations bases as $\ket{\bm{a}} = U_{\rm CNOT} \ket{\bm{a}'} (=\ket{P\bm{a}'})$. Because any GF(2) linear transformation can be expressed as a sequence of XOR operations, we can represent $U_{\rm CNOT}$ as the product of controlled-NOT (CNOT) operations. Substituting these into Eq.~\eqref{eq:prod_CZ_expr_comp_basis}, we obtain
\begin{align} \notag
&\prod_{i<j} CZ_{i,j}^{C_{i,j}(\bm{x})} = \sum_{\bm{a}} (-1)^{Q(P \bm{a}')} \ket{P \bm{a}'} \bra{P \bm{a}'}
\\
&\quad = U_{\rm CNOT} \left( \sum_{\bm{a}'} (-1)^{Q'(\bm{a}')} \ket{\bm{a}'} \bra{\bm{a}'} \right) U_{\rm CNOT}^{\dag}.
\end{align}
Since the quadratic form after the transformation $Q'(\bm{a}')$ has the form \eqref{eq:GF2_standard_form}, the above observation implies that there exists $U_{\rm CNOT}$ such that
\begin{align} \notag
&U^{\dagger}_{\rm CNOT} \left( \prod_{i<j} CZ_{i,j}^{C_{i,j}(\bm{x})} \right) U_{\rm CNOT}
\\
&=\begin{cases}
\displaystyle \prod_{i=0}^{h(\bm{x})-1} CZ_{2i,2i+1} Z_{2h(\bm{x})} & (r\text{:\,odd}),
\\
\displaystyle \prod_{i=0}^{h(\bm{x})-1} CZ_{2i,2i+1} Z_{2h(\bm{x})-2}^{\eta} Z_{2h(\bm{x})-1}^{\eta} & (r \text{:\,even}).
\end{cases}
\label{eq:simplification_of_product_CZ_by_CNOT}
\end{align}
In this way, the rank $2h(\bm{x})$ of a GF(2) symmetric matrix $C(\bm{x}) + C(\bm{x})^T$ in Theorem~\ref{theorem:main} is interpreted as twice the minimum number of CZ gates when decomposing a generalized stabilizer $S^{\bm{x}}$ using CNOT operations. This is a natural interpretation in the sense that the minimum number of non-Pauli operators included in a generalized stabilizer under Clifford transformations contributes to nonstabilizerness.

\subsection{Upper bound of SRE of 3-uniform hypergraph states}
\label{sec:Result_upper_bound}

As a simple corollary of Theorem~\ref{theorem:main}, we can obtain an upper bound of SRE for $\alpha > 1$. The rank $2h(A)$ of a GF(2) symmetric matrix $A$ satisfies the subadditivity: $2h(A+B) \leq 2h(A) + 2h(B)$. We decompose $C_{i,j}(\bm{x})$ as $C_{i,j}(\bm{x}) = \sum_{k: (i,j,k) \in E_3} x_k = \sum_{k=0}^{N-1} x_k \delta_{(i,j,k) \in E_{3}} = \sum_{k=0}^{N-1} x_k (C'_k)_{i,j}$, where $(C'_k)_{i,j} = \delta_{(i,j,k) \in E_{3}}$ and $\delta_{\text{condition}}$ is 1 if the condition is satisfied and 0 otherwise. The subadditivity implies for $\alpha > 1$
\begin{align} \notag
m_{\alpha} &\geq \frac{1}{2^{N}} \sum_{\bm{x}} 2^{-(\alpha-1)\sum_{k=0}^{N-1} x_k 2 h_k}
\\
&=
\frac{1}{2^{N}} \prod_{k=0}^{N-1} \left[ 1 + 2^{-(\alpha-1) 2 h_k} \right],
\end{align}
where $2h_k = 2h(C'_k+(C'_k)^T)$. Note that $C'_k$ is the same as $C(\bm{x})$ for $x_k=1$ and $x_i=0 ~(i \neq k)$. From the discussion in Sec.~\ref{sec:Result_PhysMean}, $h_k$ represents the minimum number of CZ gates under CNOT transformation in the generalized generator $g_k$, which corresponds to the generalized stabilizer $S^{\bm{x}}$ for $x_k = 1$ and $x_i = 0~(i\neq k)$. As a result, SRE is bounded from above:
\begin{align} \notag
M_{\alpha}
&\leq
\frac{1}{\alpha-1} \sum_{k=0}^{N-1} \left[ 1-\log_2\left( 1 + 2^{-(\alpha-1)2h_k} \right) \right]
\\
&\leq
\frac{N}{\alpha-1} \left[ 1 - \log_2\left( 1 + \frac{1}{2^{(\alpha-1)2\bar{h}}} \right) \right],
\label{eq:upper_bound_present_work}
\end{align}
where $\bar{h} = \sum_{k=0}^{N-1} h_k / N$ and we used Jensen's inequality.

The previous work~\cite{Chen2024} obtained a similar upper bound for general hypergraph states, for $\alpha > 1$,
\begin{align}
S_{\alpha} \leq \frac{N}{\alpha-1} \left[ 1- \log_2\left( 1 + \frac{1}{2^{(2\alpha-1)\bar{\Delta}}} \right) \right],
\label{eq:upper_boud_prev_work}
\end{align}
where $\bar{\Delta} = \sum_{k=0}^{N-1} \Delta_k / N$ and $\Delta_k$ is the number of hyperedges containing the $k$th vertex. For 3-uniform hypergraph states, $\Delta_k = \sum_{i<j} (C'_{k})_{i,j}$, which means that $\Delta_k$ is also understood as the number of CZ gates in the generalized generator $g_k$. By their definitions, $h_k \leq \Delta_k$ always holds. Thus, our upper bound is tighter than the previous work.

\begin{figure}[t]
\centering
\includegraphics[width=\columnwidth]{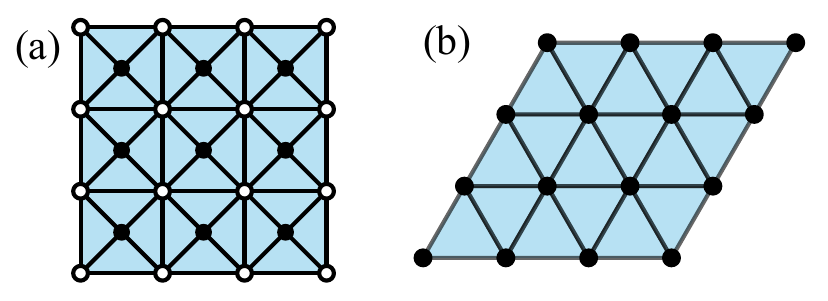}
\caption{
(a) Union Jack and (b) triangular lattice hypergraph states for the linear length $L=3$. Both filled and unfilled dots denote qubits. CCZ operations are applied to each of the smallest triangles.
}
\label{fig:lattice_unionjack_triangular}
\end{figure}

As examples of $\bar{h}$ and $\bar{\Delta}$, we take the $L \times L$ Union Jack lattice and triangular lattice hypergraph states, which are shown in Fig.~\ref{fig:lattice_unionjack_triangular}(a) and (b), respectively. For simplicity, we assume the periodic boundary condition. In the Union Jack lattice case, the rank $h_k$ corresponding to unfilled dots in Fig.~\ref{fig:lattice_unionjack_triangular}(a) is 3 and that to filled dots is 1, so that the average rank $\bar{h} = 2$ (For the $L=2$ case, $h_k = 1$ on unfilled dots, then $\bar{h}=1$). In the triangular lattice case, the rank $h_k = 2$ everywhere, therefore $\bar{h} = 2$. In both cases, $\bar{\Delta}=6$. For more detail, see Appendix~\ref{appendix:h_Delta_calc}.

\subsection{Exact evaluation of SREs of one-dimensional hypergraph states}
\label{sec:Result_SRE_1d}

Let us evaluate the SRE of concrete examples of 3-uniform hypergraph states. Even though our result~\eqref{eq:m_present_work} still contains the $\bm{x}$ summation with exponential computational cost, we can exactly calculate the PL moment of some one-dimensional hypergraph states. As discussed in Sec.~\ref{sec:Result_PhysMean}, the rank of $C(\bm{x}) + C(\bm{x})^T$ corresponds to twice the minimum number of CZ gates contained in a generalized stabilizer $S^{\bm{x}}$ when decomposing it by a CNOT unitary operation. As explained below, when decoupling $S^{\bm{x}}$ into one CZ gate and the remaining part using CNOT transformations, the remainder takes a similar form to that before the decoupling for some one-dimensional hypergraph states. In this situation, the number of relevant qubits in the remaining part is reduced. Utilizing this fact, one can construct a recursion relation for the PL moments of hypergraph states having different numbers of qubits.

\begin{figure}[t]
\centering
\includegraphics[width=\columnwidth]{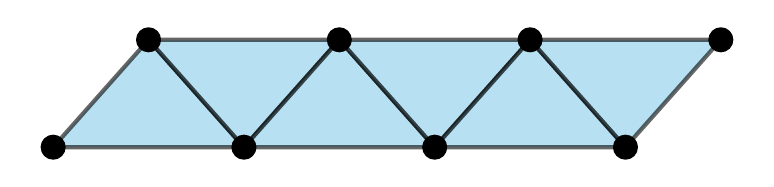}
\caption{
An example of the 3-uniform hypergraph state whose PL moment can be exactly calculated. We show the $N = 8$ case. Dots denote qubits and triangles represent CCZ operations.}
\label{fig:lattice_1d_triangular}
\end{figure}

For concreteness, we consider a hypergraph state with hyperedges
\begin{align}
E^{(N)}_3 = \{(0,1,2), (1,2,3), \cdots, (N-3,N-2,N-1)\},
\label{eq:EN_3_1d}
\end{align}
which is shown in Fig.~\ref{fig:lattice_1d_triangular}. Let us consider the decoupling of CZ gates in a generalized stabilizer $S^{\bm{x}}$ into a CZ gate on the $(0,1)$ qubit pair and the remainder. We denote the product of CZ gates in a generalized stabilizer $S^{\bm{x}}$ as $P_{\rm CZ}(S^{\bm{x}})$, which is given by
\begin{align} \notag
P_{\rm CZ}(S^{\bm{x}}) &= CZ^{x_2}_{0,1} CZ_{0,2}^{x_1} CZ_{1,2}^{x_0+x_3} CZ_{1,3}^{x_2} CZ_{2,3}^{x_1+x_4} 
\\
& \quad \times CZ_{2,4}^{x_3} CZ_{3,4}^{x_2} \times \prod_{i, j,k: (i,j,k) \in E'_3} CZ_{j,k}^{x_i},
\end{align}
where $E'_{3} = \{(3,4,5), (4,5,6),\dots, (N-3,N-2,N-1) \}$. A CZ gate on the $(0,1)$ qubits should already be contained in $P_{\rm CZ}(S^{\bm{x}})$, i.e., $x_2 = 1$ is necessary. We denote a CNOT (controlled-X) operator acting on a control qubit $k$ and a target qubit $l$ as $CX_{k,l}$, which is given by
\begin{align}
CX_{k,l} = \ket{0}\bra{0}_{k} \otimes I_l + \ket{1}\bra{1}_{k} \otimes X_{l}.
\end{align}
The conjugation by $U_{\rm CNOT} = CX_{2, 1}^{x_1} CX_{2, 0}^{x_0+x_3} CX_{3,0}$ decouples $P_{\rm CZ}(\left. S^{\bm{x}} \right|_{x_2 = 1})$ as
\begin{align} \notag
&U_{\rm CNOT}^{\dagger} P_{\rm CZ}(\left. S^{\bm{x}} \right|_{x_2=1}) U_{\rm CNOT} = CZ_{0,1} Z_2^{x_0x_1 + x_1x_3} 
\\ 
&\quad \times CZ_{2,3}^{x_4} CZ_{2,4}^{x_3} CZ_{3,4} \times \prod_{i, j,k: (i,j,k) \in E'_3} CZ_{j,k}^{x_i},
\label{eq:UCNOT_case_i}
\end{align}
where we used, with $a$ and $b$ being GF(2) variables,
\begin{align}
CX^{a}_{i,j} CZ^{b}_{i,j} CX^{a}_{i,j} &= Z^{ab}_{i} CZ_{i,j}^{b},
\\
CX^{a}_{i,j} CZ^{b}_{j,k} CX^{a}_{i,j} &= CZ^{ab}_{i,k} CZ_{j,k}^{b}.
\end{align}
Since the $Z_2^{x_0x_1 + x_1x_3}$ term does not contribute to the number of CZ gates and nonstabilizerness, we can ignore it. Looking at Eq.~\eqref{eq:UCNOT_case_i} other than $CZ_{0,1}$ and $Z_2^{x_0x_1 + x_1x_3}$, it has the same form as the product of CZ gates of a generalized stabilizer of a hypergraph state with hyperedges $E''_{3} = \{(2,3,4), (3,4,5),\dots, (N-3,N-2,N-1) \}$ with an additional condition $x_2 = 1$. Note that $E''_{3}$ does not contain the zeroth and first qubits, and therefore, $x_0$ and $x_1$ do not appear.

Based on this observation, the $N$-qubit PL moment can be related to the corresponding $(N-2)$-qubit one. We define the PL moment of the hypergraph state with hyperedges $E^{(N)}_3$ with fixed values $\{x_i = y_i\}_{i \in A}$ as
\begin{align}
m_{\alpha,(\{x_i = y_i\}_{i \in A})}^{(N)}= \frac{1}{2^{N-|A|}} \sum_{\bm{x}, x_i = y_i (i \in A)} 2^{(1-\alpha)2h(\bm{x})},
\end{align}
where $A$ is the set of indices of fixed $x$ values, and $|A|$ denotes the number of fixed values. The above discussion implies that $m_{\alpha,(x_2=1)}^{(N)}$ is equivalent to a single CZ gate contribution to the PL moment $2^{(1-\alpha)2}$ times the PL moment of a hypergraph state with hyperedges $E''_{3}$ with $x_2 = 1$. The latter takes the same value as the PL moment with hyperedges $E^{(N-2)}_{3}$ with $x_0=1$, because $E''_{3}$ does not contain the zeroth and first qubits. As a result, we obtain
\begin{align}
m_{\alpha,(x_2=1)}^{(N)} = \beta m_{\alpha,(x_0=1)}^{(N-2)},
\end{align}
where $\beta = 2^{(1-\alpha)2}$. Since this result does not depend on $x_i~(i \neq 2)$, the same relation holds even when we fix some variables other than $x_2$, such as, 
\begin{align}
m_{\alpha,(x_0=y_0,x_1=y_1,x_2=1,x_3=y_3)}^{(N)} = \beta m_{\alpha,(x_0=1,x_1=y_3)}^{(N-2)},
\end{align}
where $y_i~(i=0,1,3)$ are GF(2) variables.

\begin{table*}
\centering
\caption{Summary of the results of the decoupling of a CZ gate in a generalized stabilizer of the hypergraph state with hyperedges $E^{(N)}_3$ given by Eq.~\eqref{eq:EN_3_1d}. We can extract a CZ gate on (i) $(0,1)$, (ii) $(0,2)$, (iii) $(1,2)$ qubits, or (iv) we cannot extract a CZ gate on any pair of $(0,1,2)$ qubits. The condition column lists the condition required for each case. $U_{\rm CNOT}$ corresponds to the CNOT operations which decompose the product of CZ gates in a generalized stabilizer $P_{\rm CZ}(S^{\bm{x}})$ into one isolated CZ gate and the remainder. $U_{\rm CNOT}^{\dag} P_{\rm CZ}(S^{\bm{x}}|_{\rm condition}) U_{\rm CNOT}$ represents the results of the decoupling. The PL moment column describes the relation between the $N$-qubit PL moment with fixed $x$ values and the PL moment of a state with a reduced number of qubits. Here, $m^{(N)}_{\alpha,(y_0, y_1, y_2, y_3)}$ is an abbreviation of $m_{\alpha,(x_0=y_0,x_1=y_1,x_2=y_2,x_3=y_3)}^{(N)}$. $\beta = 2^{(1-\alpha)2}$ and $y_0$, $y_1$, $y_2$, and $y_3$ are GF(2) variables. $E'_3$ and $E''_3$ are defined in the main text.
}
\begin{tabular}{c|cc}
Case & Condition & $U_{\rm CNOT}$
\\ \hline
(i) & $x_2=1$ & $CX_{2, 1}^{x_1} CX_{2, 0}^{x_0+x_3} CX_{3,0}$
\\
(ii) & $x_1=1,x_2=0$ & $CX_{1,0}^{x_0 + x_3} CX_{3,0}^{1+x_4} CX_{4,0}^{x_3}$
\\
(iii) & $x_1=x_2=0,x_0\oplus x_3=1$ & $CX_{3,1}^{x_4} CX_{4,1}^{x_3}$
\\
(iv) & $x_1=x_2=0, x_0 \oplus x_3=0$ & 1
\end{tabular}
\\[\baselineskip]
\begin{tabular}{c|cc}
Case &  $U_{\rm CNOT}^{\dag} P_{\rm CZ}(S^{\bm{x}}|_{\rm condition}) U_{\rm CNOT}$ & PL moment
\\ \hline
(i) & $CZ_{0,1} Z_2^{x_0x_1 + x_1x_3} \times \prod_{i, j,k: (i,j,k) \in E''_3} CZ_{j,k}^{x_i}|_{x_2=1}$ & $m^{(N)}_{\alpha,(y_0,y_1,1,y_3)} = \beta m^{(N-2)}_{\alpha,(1,y_3)}$
\\
(ii) & $CZ_{0,2} \times \prod_{i, j,k: (i,j,k) \in E'_3} CZ_{j,k}^{x_i}$ & $m^{(N)}_{\alpha,(y_0,1,0,y_3)} = \beta m^{(N-3)}_{\alpha,(y_3)}$
\\
(iii) & $CZ_{1,2} \times \prod_{i, j,k: (i,j,k) \in E'_3} CZ_{j,k}^{x_i}$ & $m^{(N)}_{\alpha,(y_3 \oplus 1 ,0,0,y_3)} = \beta m^{(N-3)}_{\alpha,(y_3)}$
\\
(iv) & $CZ_{2,3}^{x_4} CZ_{2,4}^{x_3} \times \prod_{i, j,k: (i,j,k) \in E'_3} CZ_{j,k}^{x_i}$ & $m^{(N)}_{\alpha,(y_3,0,0.y_3)} = m^{(N-2)}_{\alpha,(0,y_3)}$
\end{tabular}
\label{table:list_of_decoupling}
\end{table*}

By considering other possible decouplings, we can obtain similar relations. We have four patterns of decoupling: We can extract a CZ gate on (i) $(0,1)$, (ii) $(0,2)$, (iii) $(1,2)$ qubits, or (iv) we cannot extract a CZ gate on any pair of $(0,1,2)$ qubits. This classification corresponds to the pattern of a CZ gate originating from the hyperedge $(0,1,2)$. If more than one case applies, we prioritize the case with the lower label. We summarize the results of decoupling for each case in Table~\ref{table:list_of_decoupling}.

Now, we are ready to get the recursion relations of the PL moment. For example, we consider $m_{\alpha,(x_0=0,x_1=0)}^{(N)}$. It can be decomposed as
\begin{align}
m_{\alpha,(x_0=0,x_1=0)}^{(N)} = \frac{1}{4} \sum_{y_2,y_3 \in \{0,1\}^2} m_{\alpha,(x_0=0,x_1=0,x_2=y_2,x_3=y_3)}^{(N)}.
\end{align}
Using the results in Table~\ref{table:list_of_decoupling}, we obtain
\begin{align}
m_{\alpha, (0,0)}^{(N)} = \frac{1}{4} m_{\alpha, (0,0)}^{(N-2)} + \frac{\beta}{2} m^{(N-2)}_{\alpha, (1)} + \frac{\beta}{4} m_{\alpha, (1)}^{(N-3)},
\label{eq:recursion_relations_1}
\end{align}
where we use abbreviations
\begin{align}
m^{(N)}_{\alpha,(y_0)} &= m^{(N)}_{\alpha,(x_0=y_0)},
\\
m^{(N)}_{\alpha,(y_0, y_1)} &= m^{(N)}_{\alpha,(x_0=y_0,x_1=y_1)}.
\end{align}
Similarly, we obtain the recursion relations of $m_{\alpha, (1)}^{(N)}$ and $m_{\alpha,(0,1)}^{(N)}$ as
\begin{align}
m_{\alpha, (1)}^{(N)} &= \frac{\beta}{2} m_{\alpha,(1)}^{(N-2)} + \frac{1}{8} m_{\alpha,(0,1)}^{(N-2)} + \frac{\beta}{4}m_{\alpha,(0)}^{(N-3)} + \frac{\beta}{8} m_{\alpha, (1)}^{(N-3)},
\label{eq:recursion_relations_2}
\\
m_{\alpha,(0,1)}^{(N)} &= \frac{\beta}{2} m_{\alpha, (1)}^{(N-2)} + \frac{\beta}{2} m_{\alpha}^{(N-3)}.
\label{eq:recursion_relations_3}
\end{align}

We can express the recursion relations~\eqref{eq:recursion_relations_1}, \eqref{eq:recursion_relations_2}, and \eqref{eq:recursion_relations_3} in a matrix form. We define matrices $A$, $B$, and $C$ as
\begin{align} \notag
A=\begin{pmatrix}
\frac{1}{4} & 0 & \frac{\beta}{2} \\
0 & 0 & \frac{\beta}{2} \\
0 & \frac{1}{8} & \frac{\beta}{2}
\end{pmatrix}
,~B=
\begin{pmatrix}
0 & \frac{\beta}{4} \\
\frac{\beta}{4} & \frac{\beta}{4} \\
\frac{\beta}{4} & \frac{\beta}{8}
\end{pmatrix}
,~C=
\begin{pmatrix}
1/2 & 1/2 & 0
\\
0 & 0 & 1
\end{pmatrix}.
\end{align}
The recursion relations~\eqref{eq:recursion_relations_1}, \eqref{eq:recursion_relations_2}, and \eqref{eq:recursion_relations_3} can be combined into a single matrix equation:
\begin{align} \notag
\begin{pmatrix}
m_{\alpha, (0,0)}^{(N)} \\ m_{\alpha, (0,1)}^{(N)} \\ m_{\alpha, (1)}^{(N)}
\end{pmatrix}
=
A
\begin{pmatrix}
m_{\alpha, (0,0)}^{(N-2)} \\ m_{\alpha, (0,1)}^{(N-2)} \\ m_{\alpha, (1)}^{(N-2)}
\end{pmatrix}
+
B
\begin{pmatrix}
m_{\alpha, (0)}^{(N-3)}
\\
m_{\alpha, (1)}^{(N-3)}
\end{pmatrix}.
\end{align}
Furthermore, we define the vector $\mathcal{M}_{N}$ and the matrix $\mathcal{A}$ as
\begin{align} \notag
\mathcal{M}_{N} = \begin{pmatrix}
m_{\alpha, (0,0)}^{(N)} \\ m_{\alpha, (0,1)}^{(N)} \\ m_{\alpha, (1)}^{(N)} \\ m_{\alpha, (0,0)}^{(N-1)} \\ m_{\alpha, (0,1)}^{(N-1)} \\ m_{\alpha, (1)}^{(N-1)} \\ m_{\alpha, (0)}^{(N-2)} \\ m_{\alpha,(1)}^{(N-2)}
\end{pmatrix},
~
\mathcal{A} = \begin{pmatrix}
0_{3 \times 3} & A & B
\\
I_{3} & 0_{3 \times 3} & 0_{3 \times 2}
\\
0_{2 \times 3} & C & 0_{2 \times 2}
\end{pmatrix},
\end{align}
where $0_{n \times m}$ is the $n$-by-$m$ zero matrix and $I_{n}$ is the identity matrix with size $n$, we obtain the matrix recursion relation:
\begin{align}
\mathcal{M}_{N} = \mathcal{A} \mathcal{M}_{N-1}.
\end{align}

The initial conditions of the recursion relations~\eqref{eq:recursion_relations_1}, \eqref{eq:recursion_relations_2},  and \eqref{eq:recursion_relations_3} are obtained by brute force computations:
\begin{equation}
\begin{gathered}
m^{(3)}_{\alpha, (0,0)} = \frac{1+ \beta}{2}, m^{(3)}_{\alpha, (0,1)} = \beta, m^{(3)}_{\alpha, (1)} = \beta,
\\
m^{(4)}_{\alpha, (0,0)} = \frac{1+ 3 \beta}{4}, m^{(4)}_{\alpha, (0,1)} = \beta, m^{(4)}_{\alpha,(1)} = \frac{1+7\beta}{8}.
\end{gathered}
\end{equation}
Solving the recursion relations with the initial conditions and using
\begin{align}
m_{\alpha}^{(N)} = \frac{m_{\alpha, (0,0)}^{(N)} + m_{\alpha, (0,1)}^{(N)}}{4} + \frac{m_{\alpha, (1)}^{(N)}}{2},
\end{align}
we can exactly evaluate $m_{\alpha}^{N}$. For example, $\alpha=2$ SRE for large $N$ is given by
\begin{align}
S_{\alpha=2} \approx 0.6637N - 0.9125~(N \to \infty).
\end{align}

The analysis presented here can be applied to other one-dimensional hypergraph states. The limitation is similar to that of the transfer matrix method: it requires regularity and low connectivity. In the above example, after the CZ gate decoupling, the generalized stabilizer structure of the remaining part is similar to that before the decoupling. This is due to the regularity of the hypergraph structure. As for the low connectivity, we have examined $2^4$ patterns corresponding to $x_0,x_1,x_2$, and $x_3$. If there are more hyperedges containing either $(0,1)$, $(0,2)$, or $(1,2)$ qubit pairs, we would need to investigate more patterns. For these reasons, we may have to rely on numerical computations for more complicated hypergraph states.

\subsection{Numerical evaluation of SRE of triangulated lattice hypergraph states}
\label{sec:result_SRE_2d}

\begin{figure}[t]
\centering
\includegraphics[width=\columnwidth]{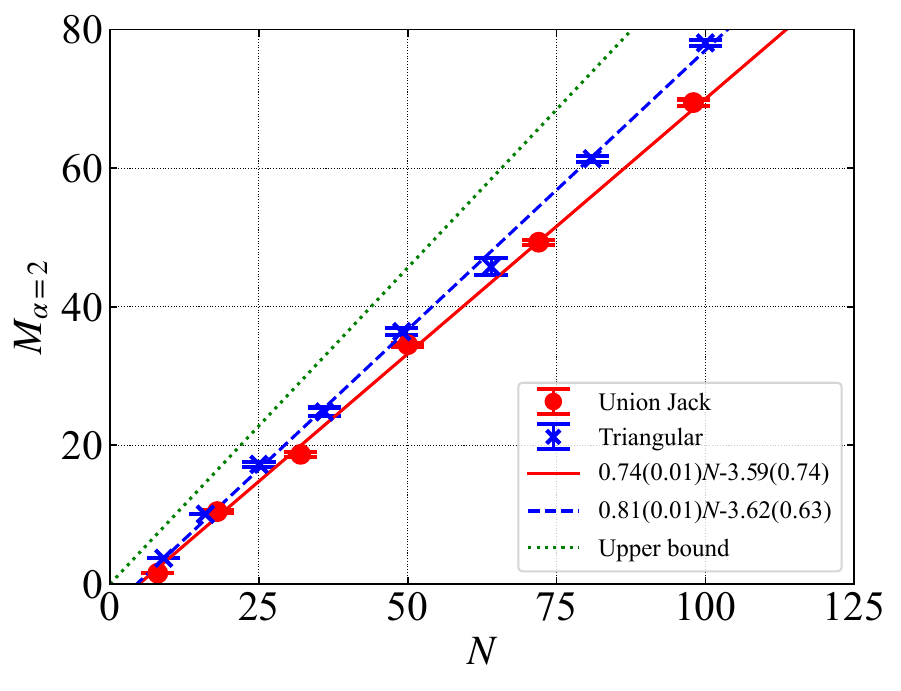}
\caption{
Evaluated $\alpha=2$ SREs of Union Jack and triangular lattice hypergraph states. Dots and crosses with error bars represent SREs of Union Jack and triangular lattice hypergraph states, respectively. Solid and dashed lines show linear fits. The dotted line is the upper bounds of $\alpha=2$ SRE of these states, Eq.~\eqref{eq:upper_bound_present_work} with $\bar{h} = 2$.
}
\label{fig:2d_hypergraph}
\end{figure}

Next, we evaluate the SREs of more complicated hypergraph states. We take the Union Jack lattice and the triangular lattice hypergraph states, shown in Fig.~\ref{fig:2d_hypergraph}(a) and (b), respectively. In the context of MBQC, the Union Jack lattice hypergraph state is Pauli-universal, but the triangular lattice one is considered not to be~\cite{Liu2022}. Thus, it is interesting to quantify SREs of these states.

We briefly explain how to numerically evaluate the SRE, or the PL moment, in a Monte Carlo method. The elementary Monte Carlo method gives the approximation of the PL moment as, for $\alpha = 2$, 
\begin{align}
m_{\alpha=2} \approx \frac{1}{M} \sum_{i=0}^{M-1} 2^{-2h(\bm{x}_i)},
\end{align}
where $M$ is the number of samples and $\bm{x}_i$ is the $i$th sample from the uniform distribution of $\bm{x}$. The choice of a more efficient distribution makes the calculation faster and more accurate, but it suffices for the present study. In the following calculation, we take $M = 2^8$.

Figure \ref{fig:2d_hypergraph} shows the calculated SREs of these states. Overall, the triangular lattice hypergraph state has a larger SRE than the Union Jack lattice one. This is consistent with the magic estimation in previous work~\cite{Liu2022}, in which it was found that the upper bound of the relative entropy of magic of the former state is larger than that of the latter. It was also proved that a state possessing too much nonstabilizerness is not Pauli-universal. These imply that a state having moderate nonstabilizerness is necessary for the Pauli universality. Can we distinguish the state having the Pauli universality using SREs? We do not have a clear answer to this question at this stage. We leave it as a future problem.

\section{Conclusion}
\label{sec:summary}

We have investigated the stabilizer R\'enyi entropy (SRE) of 3-uniform hypergraph states. We have shown that the SRE of 3-uniform hypergraph states can be expressed in terms of the rank of a symmetric GF(2) matrix $C(\bm{x}) + C(\bm{x})^T$, where $C(\bm{x})$ is an upper triangular GF(2) matrix determined by the structure of a hypergraph and a bit-string $\bm{x}$ corresponding to the $X$ operators in a Pauli string in the expression for the Pauli-Liouville moment. This expression enables us to exactly evaluate SREs of one-dimensional hypergraph states, as well as to make efficient evaluations of SRE. The origin of the efficiency would be attributed to the Aaronson-Gottesmann algorithm of evaluating the inner product of two stabilizer states. Utilizing this efficiency, we have numerically evaluated SREs for Union Jack and triangular lattice hypergraph states with numerous qubits.

As a future problem, it would be worthwhile to explore the relation between SREs and the physical properties of 3-uniform hypergraph states. In particular, it is intriguing to investigate whether the characteristics of SREs are related to the Pauli universality of measurement-based quantum computation (MBQC). Another interesting future direction is to extend our method to other states, such as $n$-uniform~$(n>3)$ hypergraph states and weighted graph states~\cite{Dur2005,Hein2006}. Since some of the weighted graph states are known to exhibit the Pauli universality~\cite{Kissinger2019}, this extension may give us useful insight into the relation between nonstabilizerness and MBQC. Since hypergraph states are employed in various contexts, including demonstrations of quantum advantage, MBQC, and the study of topological phases, our results would contribute to a deeper understanding of the role of nonstabilizerness in these areas.

\acknowledgments
The authors acknowledge Yuichiro Matsuzaki and Yuki Takeuchi for Insightful comments.
DK is supported by JSPS KAKENHI Grant Number JP25K17319.
ST is supported by JSPS KAKENHI Grant Number 25K07158.

\appendix
\section{Review of the theory of quadratic forms over GF(2)}
\label{appendix_A}

In this appendix, we review the theory of quadratic forms over GF(2) following section 6 in Ref.~\cite{Lidl_Niederreiter_1996}. The purpose of this appendix is to show Eq.~\eqref{eq:GF2_standard_form}. Even though Ref.~\cite{Lidl_Niederreiter_1996} deals with the quadratic form over general finite fields, we focus on the GF(2) case only and slightly modify the discussion to the GF(2) case.

In this appendix, we denote a quadratic form by $f(a_0,a_1,\dots,a_{N-1}) = \sum_{0 \leq i \leq j \leq N-1} C_{i,j} a_i a_j$. Note that $f$ includes the diagonal elements $a_i^2 = a_i$. We call a quadratic form $f$ in $N$ variables nondegenerate if $f$ cannot be transformed into a quadratic form in fewer than $N$ variables. Otherwise, we call it degenerate. To show Eq.~\eqref{eq:GF2_standard_form}, it is sufficient to discuss the nondegenerate case only. Before proceeding to the proof of Eq.~\eqref{eq:GF2_standard_form}, we prove the following lemma.

\begin{lemma}[6.29 in Ref.~\cite{Lidl_Niederreiter_1996}]
A nondegenerate quadratic form in $N(\geq 3)$ variables $f$ can reduce to $f(a_0,a_1,\dots,a_{N-1}) = a_0 a_1 + g(a_2, \dots, a_{N-1})$, where $g(a_2, \dots, a_{N-1})$ is a nondegenerate quadratic form in $N-2$ variables.
\end{lemma}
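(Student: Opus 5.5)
The plan is to argue coordinate-free with the polar bilinear form
\begin{align}
B(\bm{u},\bm{v}) = f(\bm{u}+\bm{v}) + f(\bm{u}) + f(\bm{v}),
\end{align}
which over GF(2) is symmetric and alternating ($B(\bm{u},\bm{u})=0$); its Gram matrix is $C+C^T$. The goal is to find two vectors $\bm{u},\bm{v}$ with
\begin{align}
f(\bm{u})=f(\bm{v})=0, \qquad B(\bm{u},\bm{v})=1 .
\end{align}
Then we take $\bm{u},\bm{v}$ as the first two basis vectors and complete them with a basis of $W=\{\bm{w}: B(\bm{w},\bm{u})=B(\bm{w},\bm{v})=0\}$. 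Since $B$ restricted to $\mathrm{span}(\bm{u},\bm{v})$ is nondegenerate, $W$ is a complement of dimension $N-2$. For $\bm{w}\in W$ we get
\begin{align}
f(a_0\bm{u}+a_1\bm{v}+\bm{w}) = a_0a_1 B(\bm{u},\bm{v}) + a_0 f(\bm{u}) + a_1 f(\bm{v}) + f(\bm{w}) = a_0a_1 + g(\bm{w}).
\end{align}
This uses $a_i^2=a_i$ and the vanishing of all cross terms with $\bm{w}$. The change of basis is a nonsingular GF(2) matrix, so this is an allowed transformation.

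First I will record two consequences of nondegeneracy. (a) There is no nonzero $\bm{w}$ with $B(\bm{w},\cdot)=0$ and $f(\bm{w})=0$. Otherwise, choosing a basis containing $\bm{w}$, the corresponding variable would drop out of $f$, leaving fewer than $N$ variables. (b) $B\neq 0$. If $B=0$, then $f$ is additive, i.e.\ a linear form; a linear form can be written in at most one variable, which is fewer than $N\geq 3$. So there exist $\bm{w},\bm{v}$ with $B(\bm{w},\bm{v})=1$.

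The main step is producing the isotropic vector $\bm{u}$, and I would split into cases.
\begin{itemize}
\item If $f(\bm{w})=0$ or $f(\bm{v})=0$, relabel so that $f(\bm{w})=0$ and set $\bm{u}=\bm{w}$.
\item If $f(\bm{w})=f(\bm{v})=1$, pick a nonzero $\bm{t}$ in the $B$-orthogonal complement of $\mathrm{span}(\bm{w},\bm{v})$; such $\bm{t}$ exists because that complement has dimension $N-2\geq 1$, and this is where $N\geq 3$ is used.
\begin{itemize}
\item If $f(\bm{t})=0$, then by (a) $\bm{t}$ is not in the radical of $B$. Hence some $\bm{v}'$ has $B(\bm{t},\bm{v}')=1$, and we set $\bm{u}=\bm{t}$ with partner $\bm{v}'$.
\item If $f(\bm{t})=1$, set $\bm{u}=\bm{w}+\bm{t}$. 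Then $f(\bm{u})=1+1+B(\bm{w},\bm{t})=0$ and $B(\bm{u},\bm{v})=1$.
\end{itemize}
\end{itemize}
Once $\bm{u}$ is fixed together with a partner $\bm{v}$ satisfying $B(\bm{u},\bm{v})=1$, I replace $\bm{v}$ by $\bm{v}+f(\bm{v})\bm{u}$. This keeps $B(\bm{u},\bm{v})=1$ and makes $f(\bm{v})=f(\bm{v})+f(\bm{v})\cdot 0+f(\bm{v})B(\bm{u},\bm{v})=0$. The pair then satisfies the required conditions, and the splitting above applies.

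Finally I check that $g$ is nondegenerate in $N-2$ variables. If some linear change of the $W$-coordinates expressed $g$ in fewer than $N-2$ variables, the same change, extended by the identity on $(a_0,a_1)$, would express $f$ in fewer than $N$ variables, contradicting nondegeneracy of $f$.

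I expect the delicate step to be the existence of $\bm{u}$ with $f(\bm{u})=0$ outside the radical of $B$. A naive coordinate reduction can leave the anisotropic piece $a_0a_1+a_0+a_1$, which cannot be removed inside two variables. The extra dimension guaranteed by $N\geq3$ is exactly what resolves this, through the case split above.
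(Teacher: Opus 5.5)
Your proof is correct, but it takes a different route from the paper. The paper reproduces Lidl--Niederreiter's coordinate computation in two stages:
\begin{itemize}
\item First it forces $C_{0,0}=0$. It renames variables if some diagonal coefficient vanishes. Otherwise it excludes $(a_0+\dots+a_{N-1})^2$ as degenerate, renames so that $C_{1,2}=1$, and applies two explicit substitutions.
\item Then, with $C_{0,1}=1$ guaranteed by nondegeneracy, it applies two more substitutions that absorb every remaining cross term of $a_0$ and $a_1$, leaving $a_0a_1+g$.
\end{itemize}
You instead work coordinate-free with the polar form $B$, whose Gram matrix is $C+C^T$. You construct a hyperbolic pair $(\bm{u},\bm{v})$ and split off $\mathrm{span}(\bm{u},\bm{v})^{\perp}$.

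The ingredients correspond. Your (a) and (b) are the paper's two appeals to nondegeneracy: that not all $C_{0,j}$ vanish, and the exclusion of the diagonal-only form. Your trick $\bm{u}=\bm{w}+\bm{t}$, adding two $B$-orthogonal vectors with $f=1$, is exactly what the paper's substitution $a'_1=C'_{0,0}a''_0+a''_1$ does in coordinates. In terms of basis vectors, it replaces $\bm{e}'_0$ by $\bm{e}'_0+C'_{0,0}\bm{e}_1$, where $B(\bm{e}'_0,\bm{e}_1)=0$ and $f(\bm{e}_1)=1$.

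What your version buys is transparency:
\begin{itemize}
\item It shows precisely where $N\geq 3$ is needed and why: the anisotropic binary form $a_0a_1+a_0+a_1$. The paper uses $N\geq 3$ only silently, when it renames to $C_{1,2}=1$ while keeping a spare index $0$.
\item Each split-off pair contributes a rank-$2$ block to the Gram matrix of $B$. So the relation between $\mathrm{rank}(C+C^T)$ and the number of quadratic pairs, proved separately in Appendix~\ref{appendix_B}, becomes nearly immediate.
\end{itemize}
The paper's version instead gives a fully explicit sequence of substitutions, each a composition of XOR operations, that produces $P$ directly without bilinear-form machinery. This fits its later reading of $P$ as a CNOT circuit.
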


\begin{proof}
First, we show that $f$ can always be transformed to a quadratic form with $C_{0,0} = 0$ by a linear transformation. If some $C_{i,i}$ is zero, we obtain $C_{0,0} = 0$ by renaming variables. Thus, we consider the case where all $C_{i,i}= 1$. Next, if we had $C_{i,j} = 0$ for all $i < j$, then
\begin{align}
f = a_0^2 + a_1^2 + \dots + a_{N-1}^2 = (a_0 + a_1 + \dots + a_{N-1})^2,
\end{align}
which can be transformed to a quadratic form in one variable by $a_0 = a_0 + a_1 + \dots + a_{N-1}$, i.e., $f$ is degenerate. Therefore, we assume $C_{1,2} = 1$ by renaming variables suitably. We separate the terms of $f$ involving $a_{1}$ as
\begin{align} \notag
f &= a_1^2 + a_1 (C_{0,1} a_0 + a_{2} + \dots + C_{1,N-1} a_{N-1})
\\
&+ a_0^2 + \sum_{2 \leq j \leq N-1} C_{0,j} a_0 a_j
+ \sum_{2\leq i \leq j \leq N-1} C_{i,j} a_{i} a_{j},
\end{align}
and carrying out the linear transformation
\begin{equation}
a_{2} = C_{0,1} a'_0 + a'_2 + \sum_{j=3}^{N-1} C_{1,j} a'_{j},~a_{i} = a'_{i}~\mathrm{for}~i\neq 2.
\end{equation}
After the substitution, let $C'_{i,j}$ be the coefficient of $a'_i a'_j$;
\begin{align} \notag
f &= (a'_1)^2 + a'_1 a'_2 + C'_{0,0} (a'_0)^2 + \sum_{2 \leq j \leq N-1} C'_{0,j} a'_0 a'_j
\\
&\quad + \sum_{2\leq i \leq j \leq N-1} C'_{i,j} a'_{i} a'_{j}.
\end{align}
Further, we apply the linear transformation
\begin{equation}
a'_1 = C'_{0,0} a''_0 + a''_1,~a'_{i} = a''_{i}~\mathrm{for}~i\neq 1,
\end{equation}
getting
\begin{align} \notag
f
&= (a''_1)^2 + C'_{0,0} a''_0 a''_2 + a''_1 a''_2
\\
&\quad + \sum_{2 \leq j \leq N-1} C'_{0,j} a''_0 a''_j
+ \sum_{2\leq i \leq j \leq N-1} C'_{i,j} a''_{i} a''_{j}.
\end{align}
This is the desired form, i.e., a quadratic form with $C_{0,0} = 0$.

Now let $f$ be a quadratic form with $C_{0,0} = 0$:
\begin{align}
f = \sum_{1 \leq j \leq N-1} C_{0,j} a_0 a_j + \sum_{1 \leq i \leq j \leq N-1} C_{i,j} a_i a_j.
\end{align}
Since $f$ is nondegenerate, not all $C_{0,j}$ can be zero, so that we assume $C_{0,1} = 1$. We apply the linear transformation
\begin{equation}
a_1 =a'_1 + C_{0,2} a'_2 + \dots + C_{0,N-1} a'_{N-1},~a_{i} = a'_{i}~\mathrm{for}~i\neq 1.
\end{equation}
We write $f$ in the form $f = a'_0 a'_1 + \sum_{1 \leq i \leq j \leq N-1} C'_{i,j} a'_i a'_j$ after the transformation. The linear transformation
\begin{equation}
a'_0 =a''_0 + C'_{1,1} a''_1 + \dots + C'_{1,N-1} a''_{N-1},~a'_{i} = a''_{i}~\mathrm{for}~i\neq 0,
\end{equation}
yields
\begin{align}
f
=
a''_0 a''_1 +  \sum_{2 \leq i \leq j \leq N-1} C'_{i,j} a''_i a''_j.
\end{align}
Writing $g = \sum_{2 \leq i \leq j \leq N-1} C'_{i,j} a''_i a''_j$, we prove the statement of the lemma. Because $f$ is nondegenerate, $g$ is also nondegenerate.
\end{proof}

This lemma leads to the following theorem, which is the same statement as Eq.~\eqref{eq:GF2_standard_form}.

\begin{theorem}[6.30 in Ref.~\cite{Lidl_Niederreiter_1996}]
Let $f$ be a nondegenerate quadratic form having $N$ GF(2) variables. If $N$ is odd, $f$ is equivalent to
\begin{align}
f = a_0a_1 + a_2a_3 + \dots + a_{N-3} a_{N-2} + a_{N-1},
\label{eq:theorem_odd}
\end{align}
and, if $N$ is even, $f$ is equivalent to
\begin{align} \notag
f &= a_0 a_1 + a_2 a_3 + \dots + a_{N-4} a_{N-3} + a_{N-2} a_{N-1}
\\
&\quad + \eta( a_{N-2} + a_{N-1}),
\label{eq:theoreom_even}
\end{align}
under a GF(2) linear transformation. Here $\eta = 0$ or 1.
\end{theorem}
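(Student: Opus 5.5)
The plan is to prove Theorem 6.30 by induction on $N$, peeling off one hyperbolic pair $a_0a_1$ at a time with Lemma 6.29 and then classifying the small nondegenerate forms by hand.

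\textbf{Reduction to base cases.} For $N\ge 3$, Lemma 6.29 gives
\[
f\sim a_0a_1+g(a_2,\dots,a_{N-1}),
\]
where $g$ is nondegenerate in $N-2$ variables. A linear transformation acting only on $a_2,\dots,a_{N-1}$ is a linear transformation of the whole set of variables. Hence any normal form obtained for $g$ carries over to $f$, with indices shifted by two. Iterating, every nondegenerate form in $N$ variables reduces to a sum of hyperbolic pairs plus a nondegenerate form in either one variable ($N$ odd) or two variables ($N$ even). So it suffices to handle $N=1$ and $N=2$.

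\textbf{Base case $N=1$.} The form is $f=C_{0,0}a_0^2=C_{0,0}a_0$. If $C_{0,0}=0$, then $f\equiv 0$ depends on no variables, so it is degenerate. Thus $C_{0,0}=1$ and $f=a_0$, which matches Eq.~\eqref{eq:theorem_odd} with zero pairs.

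\textbf{Base case $N=2$.} Write
\[
f=C_{0,0}a_0+C_{1,1}a_1+C_{0,1}a_0a_1,
\]
using $a_i^2=a_i$. If $C_{0,1}=0$, then $f$ is linear, $f=\ell(a_0,a_1)$. Such an $f$ either vanishes or, after a change of basis making $\ell$ a coordinate, depends on one variable, so it is degenerate. Hence $C_{0,1}=1$. There are then four subcases, determined by $(C_{0,0},C_{1,1})$.
\begin{itemize}
\item[(i)] $(0,0)$: $f=a_0a_1$, which is the normal form with $\eta=0$.
\item[(ii)] $(1,1)$: $f=a_0a_1+a_0+a_1$, which is the normal form with $\eta=1$.
\item[(iii)] $(1,0)$: $f=a_0a_1+a_0=a_0(a_1+1)$. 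This is not linear, so the substitution $a_1\mapsto a_1+1$ is not allowed. Instead apply $a_1=a_1'+a_0'$, $a_0=a_0'$. Using $a_0^2=a_0$, this gives $a_0'a_1'+a_0'+a_0'=a_0'a_1'$, which reduces to case (i).
\item[(iv)] $(0,1)$: this is symmetric to case (iii).
\end{itemize}
So every nondegenerate binary form is equivalent to one of the two stated even forms.

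\textbf{Assembling the result.} Combining the base cases with the inductive reduction, and relabeling the leftover variables to sit at the end ($a_{N-1}$, or $a_{N-2},a_{N-1}$), gives Eqs.~\eqref{eq:theorem_odd} and~\eqref{eq:theoreom_even}.

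\textbf{Main obstacle.} The obstacle is the $N=2$ bookkeeping, in particular remembering that $a_i^2=a_i$ turns diagonal terms into linear ones. That fact is exactly why subcases (iii) and (iv) collapse to $\eta=0$ while subcase (ii) does not. Theorem 6.30 only claims equivalence to one of the two listed forms, not that the $\eta=0$ and $\eta=1$ forms are inequivalent, so no extra argument separating them is needed.
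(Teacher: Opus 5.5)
Your proof is correct and follows the paper's argument: you iterate Lemma 6.29 to split off hyperbolic pairs, then classify the leftover one- or two-variable nondegenerate form by the same case analysis. The cross term must be present; diagonal coefficients $(0,0)$ and $(1,1)$ give $\eta=0$ and $\eta=1$; and the mixed cases collapse to $\eta=0$ by a shear such as $a_1\mapsto a_1+a_0$.

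You are slightly more careful than the paper in two places. You explicitly use nondegeneracy to force $C_{0,0}=1$ in the one-variable case. In the $C_{0,1}=0$ case you use a genuine change of basis, rather than the paper's substitution $a_{N-2}=ba_{N-2}+da_{N-1}$, which is not invertible when $b=0$.
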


\begin{proof}
If $N$ is odd, inductively using Lemma, $f$ is equivalent to $a_0a_1 + a_2a_3 + \dots + a_{N-3} a_{N-2} + a_{N-1}^2$ and $a_{N-1}^2 = a_{N-1}$, we obtain Eq.\,\eqref{eq:theorem_odd}. If $N$ is even, using induction, $f$ is equivalent to $a_0 a_1 + a_2 a_3 + \dots + a_{N-4} a_{N-3} + b a_{N-2}^2 + c a_{N-2} a_{N-1} + d a_{N-1}^2$.
We examine the last three terms, $b a_{N-2}^2 + c a_{N-2} a_{N-1} + d a_{N-1}^2$. For $c=0$, we have $b a_{N-2}^2 + d a_{N-1}^2 = ba_{N-2} + d a_{N-1}$. Replacing $a_{N-2} = ba_{N-2} + d a_{N-1}$ implies that $f$ is degenerate, which contradicts the assumption. Thus, we can put $c=1$. When $b=d=0$, we trivially obtain Eq.\,\eqref{eq:theoreom_even} with $\eta = 0$. If $b = 0$ and $d=1$, $a_{N-2} a_{N-1} + a_{N-1}^2 = (a_{N-2} + a_{N-1}) a_{N-1}$. By replacing $a_{N-2} = a_{N-2} + a_{N-1}$, we obtain Eq,\,\eqref{eq:theoreom_even} with $\eta = 0$. We can reach the same conclusion when $b=1$ and $d=0$. For $b=c=d=1$, we cannot further simplify the equation by the linear transformation, so that we obtain Eq.\,\eqref{eq:theoreom_even} with $\eta = 1$.
\end{proof}

\section{Relation between the rank of $C+C^T$ and the non-degenerate subspace dimension of quadratic form $Q$}
\label{appendix_B}

In this appendix, we show the relation between the rank $2h$ of $C+C^T$ and the non-degenerate subspace dimension $r$ of the quadratic form $Q = \bm{a}^T C \bm{a}$, i.e., $r = 2h+1 (2h)$ when $r$ is odd (even).

From the discussion of Appendix \ref{appendix_A}, there exists a GF(2) matrix $P$ such that the linear transformation $\bm{a} = P \bm{a}'$ makes $C$ into
\begin{align}
\tilde{C} =
\begin{cases}
\displaystyle \bigoplus_{k=1}^{(r-1)/2} \begin{pmatrix}
0 & 1 \\ 0 & 0
\end{pmatrix} \oplus I_{1} \oplus 0_{N-r} & (r: \text{odd}),
\\
\displaystyle \bigoplus_{k=1}^{r/2-1} \begin{pmatrix}
0 & 1 \\ 0 & 0
\end{pmatrix} \oplus 
\begin{pmatrix}
\eta & 1 \\ 0 & \eta
\end{pmatrix} \oplus 
0_{N-r} & (r: \text{even}),
\end{cases}
\end{align}
where $0_{n}$ is the $n$-by-$n$ zero matrix and $I_{n}$ is the identity matrix with size $n$. In other words, $\tilde{C} = P^T C P + S$, where $S$ is a symmetric matrix with zero diagonal entries satisfying $S_{i,j} = (P^T C P)_{i,j}$ for $i > j$. $S$ makes $\tilde{C}$ an upper triangular matrix. Using this representation, $\tilde{C}+\tilde{C}^T = P^T (C + C^T) P$, since $S+S^T=2S=0$ over GF(2). Thus, the same linear transformation $\bm{a} = P \bm{a}'$ transforms $C + C^T$ as
\begin{align}
\tilde{C} + \tilde{C}^T = \bigoplus_{k=1}^{h} \begin{pmatrix}
0 & 1 \\ 1 & 0 
\end{pmatrix} \oplus 0_{N-2h}.
\end{align}
This implies that the rank $2h$ of $C+C^{T}$ is the same as twice the number of quadratic pairs in the quadratic form $Q$, that is, $r = 2h + 1~(2h)$ when $r$ is odd (even).

\section{Detailed calculation of $h_k$ and $\Delta_k$}
\label{appendix:h_Delta_calc}

In this appendix, we show detailed calculations of $h_k$ and $\Delta_k$ of the Union Jack and triangular lattice hypergraph states in Sec.~\ref{sec:Result_upper_bound}.

We consider $L \times L$ Union Jack and triangular lattices with periodic boundary conditions. To properly define hypergraph states, $L \geq 2~(3)$ is necessary for the Union Jack (triangular) lattice. Otherwise, some hyperedges contain the same qubit twice, or the resulting state is trivial, $\ket{+}^{\otimes N}$.

First, we consider the Union Jack lattice hypergraph state for $L \geq 3$. Since the $L = 2$ case is special, we discuss it later. The unfilled and filled dots (qubits) in Fig.~\ref{fig:lattice_unionjack_triangular}(a) are, respectively, connected to 8 and 4 qubits by hyperedges. The non-zero parts of the matrix $C'_{k}$ corresponding to the filled and unfilled qubits, denoted by $C'_{k,{\rm unfilled}}$ and $C'_{k,{\rm filled}}$, are given by
\begin{gather}
C'_{k,{\rm unfilled}} = \begin{pmatrix}
0 & 1 & 0 & 0 & 0 & 0 & 0 & 1 \\
& 0 & 1 & 0 & 0 & 0 & 0 & 0 \\
&  & 0 & 1 & 0 & 0 & 0 & 0 \\
&	&	& 0 & 1 & 0 & 0 & 0 \\
&	&	&	& 0 & 1 & 0 & 0 \\
&	&	&	& & 0 & 1 & 0 \\
&	&	&	& &  & 0 & 1 \\
&	&	&	& &  &  & 0 \\
\end{pmatrix},
\\
C'_{k,{\rm filled}} = \begin{pmatrix}
0 & 1 & 1 & 0 \\
   & 0 & 0 & 1 \\
   &   & 0 & 1 \\
   &	&	& 0
\end{pmatrix}.
\end{gather}
The ranks of symmetrized versions of these GF(2) matrices are $2h_{k,{\rm unfilled}} = 6$ and $2h_{k,{\rm filled}} = 2$, respectively. Since $\Delta_k = \sum_{i<j} (C'_{k})_{i,j}$, $\Delta_{k,{\rm unfilled}}=8$ and $\Delta_{k,{\rm filled}} = 4$. Taking average, we obtain $\bar{h} = 2$ and $\bar{\Delta} = 6$ for $L \geq 3$.

For the $L=2$ Union Jack lattice case, unfilled qubits are connected to 6 qubits by hyperedges, whereas filled qubits are the same as the $L \geq 3$ case. Due to periodic boundary conditions, the unfilled qubits above and below a given unfilled qubit become identical, and likewise, the unfilled qubits to its left and right are identical. The non-zero parts of the matrix $C'_{k}$ corresponding to unfilled qubits, $C'_{k,{\rm unfilled},L=2}$, is given by
\begin{align}
C'_{k,{\rm unfilled},L=2} =
\begin{pmatrix}
0 & 1 & 1 & 0 & 1 & 1 \\
   & 0 & 0 & 1 & 0 & 0 \\
   &   & 0 & 1 & 0 & 0  \\
   &	&	& 0 & 1 & 1  \\
   &	&	&	& 0 & 0 \\
      &	&	&	&    & 0 \\
\end{pmatrix}.
\end{align}
The corresponding symmetrized matrix is of rank $2h_{k,{\rm unfilled},L=2} = 2$. Consequently, we obtain $\bar{h} = 1$. Note that $\Delta_{k,{\rm unfilled},L=2}$ equals 8, which is the same as in the $L \geq 3$ case.

As for the triangular lattice hypergraph state, every qubit is connected to 6 qubits by hyperedges. The non-zero part of the matrix $C'_{k}$, expressed as $C'_{k,{\rm triangular}}$, has the same form for every qubit, which is given by
\begin{align}
C'_{k,{\rm triangular}}  = \begin{pmatrix}
0 & 1 & 0 & 0 & 0 & 1 \\
   & 0 & 1 & 0 & 0 & 0 \\
   &   & 0 & 1 & 0 & 0 \\
   &	&	& 0 & 1 & 0 \\
   &	&	&	& 0 & 1  \\
      &	&	&	&    & 0
\end{pmatrix}.
\end{align}
The symmetrized form of this matrix has the rank $2h_{k,{\rm trinangular}} = 2\bar{h} = 4$. In the same way as the Union Jack lattice case, we obtain $\Delta_{k,{\rm triangular}} = \bar{\Delta} = 6$.

\bibliography{references.bib}

\onecolumngrid

\end{document}